\newcommand{\Cn}{\mathcal{C}_n}
\newcommand{\Pn}{\mathcal{P}_n}
\newcommand{\paulis}{\{I,X,Y,Z\}}
\newcommand{\mes}[1]{\ket{\Phi_{#1}}}
\newcommand{\choi}[2]{\ket{\Phi_{#1}^{#2}}}
\newcommand{\stab}{\text{Stab}}
\newcommand{\kpsi}{\ket{\psi}}
\newcommand{\Ftwo}{\mathbb{F}_2}
\newcommand{\Ctil}{\widetilde{C}}
\newcommand{\Pcd}{P^{\vec{c}, \vec{d}}}
\newcommand{\Pab}{P^{\vec{a}, \vec{b}}}
\newcommand{\Phf}{P^{\vec{h}, \vec{f}}}
\newcommand{\binset}[1]{\{0,1\}^{#1}}
\newcommand{\vecal}{\vec{\alpha}}
\newcommand{\sysA}{\mathcal{A}}
\newcommand{\sysB}{\mathcal{B}}
\newcommand{\eps}[1]{\epsilon^{(#1)}}
\renewcommand{\ij}{\vec{i},\vec{j}}
\newcommand{\kl}{\vec{k}, \vec{\ell}}
\newcommand{\albet}{\vecal, \vec{\beta}}
\newcommand{\gamlam}{\vec{\gamma}, \vec{\lambda}}
\newcommand{\K}{\vec{K}}
\newcommand{\J}{\vec{J}}
\newtheorem{lemma}{Lemma}
\newtheorem{theorem}{Theorem}
\begin{document}

\title{Process Tomography for Clifford Unitaries}
\date{\today}
\author{Timothy Skaras}
\email{tgs52@cornell.edu}
\author{Paul Ginsparg}
\affiliation{Department of Physics, Cornell University, Ithaca, NY 14853, USA}
\begin{abstract}
We present an algorithm for performing quantum process tomography on an unknown $n$-qubit unitary $C$ from the Clifford group. Our algorithm uses Bell basis measurements to deterministically learn $C$ with $4n + 3$ queries, which is the asymptotically optimal query complexity. In contrast to previous algorithms that required access to $C^\dagger$ to achieve optimal query complexity, our algorithm achieves the same performance without querying $C^\dagger$. Additionally, we show the algorithm is robust to perturbations and can efficiently learn the closest Clifford to an unknown non-Clifford unitary $U$ using query overhead that is logarithmic in the number of qubits. 
\end{abstract}

\maketitle

\section{Introduction}

In the general case, learning an unknown quantum state or unknown quantum channel requires resources exponential in system size: quantum state tomography requires exponential copy complexity\footnote{By copy complexity, we mean the number of copies of the unknown state required to complete the task.}  \cite{haah2016sample} and quantum process tomography requires exponential query complexity \cite{haah2023query}. It is worth noting, however, that these bounds apply when there is no prior information about the unknown state or process. Recent work \cite{zhao2024learning} has highlighted that although most unitaries or states have exponential gate complexity, such states and processes are not physical in the sense that they cannot arise in polynomial time in real systems governed by local Hamiltonians. Indeed, if the unknown state or process is known to have certain structure, in many cases the learning task can be performed much more efficiently.

For instance, various work has investigated the copy complexity of quantum state tomography on an unknown stabilizer state. From Holevo's theorem \cite{holevo1973bounds}, the copy complexity of state tomography on an $n$-qubit stabilizer state is $\Omega(n)$. Gottesman and Aaronson \cite{gottesman_aaronson_08080052} described two algorithms for this task: one which used $\mathcal{O}(n^2)$ single-copy measurements and another which used entangled measurements on $\mathcal{O}(n)$ copies. While the latter is theoretically optimal, collective measurement on this many copies is impractical. Montanaro \cite{montanaro2017learning} described an algorithm which used two-copy Bell basis measurements to perform state tomography with overall $\mathcal{O}(n)$ copy complexity.

For the task of process tomography, Low \cite{low2009learning} proposed an algorithm for learning an unknown Clifford $C$ with $\mathcal{O}(n)$ query complexity. This query complexity was shown to be optimal, but the algorithm required access to the inverse process $C^\dagger$. More recent work \cite{lai2022learning} describes a probabilistic algorithm for learning $C$ without $C^\dagger$, but the method uses $\mathcal{O}(n^2)$ queries and is therefore not asymptotically optimal. In this work, we propose a deterministic $4n + 3$ query complexity algorithm for process tomography on a Clifford unitary which does not require access to $C^\dagger$. Our algorithm queries $C$ on Bell basis states to deterministically learn how $C$ transforms Pauli operators. Lastly, we consider the case in which we have oracle access to an unknown non-Clifford unitary $U$. We find that by repeating each step of our original algorithm multiple times, we can learn the closest Clifford to $U$ with minimal overhead.

\section{Background}
The $n$-qubit Pauli group $\mathcal{P}_n$ is generated by the single-qubit Pauli operators 
\[
X_j =  
\begin{pmatrix}
    0 & 1\\
    1 & 0
\end{pmatrix}, \,
Y_j = 
\begin{pmatrix}
    0 & -i\\
    i & 0
\end{pmatrix}, \,
Z_j = 
\begin{pmatrix}
    1 & 0\\
    0 & -1
\end{pmatrix},
\]
which are understood as acting on the $j^{th}$ qubit for $j = 1,\ldots, n$.
We denote Pauli operators using pairs of bits:
\begin{equation}
P^{00} = I, \quad P^{01} = X, \quad P^{10} = Z, \quad P^{11} = Y.
\end{equation}
Using bit strings $\vec{a} , \vec{b}\in \binset{n}$ , we extend this notation to denote $n$-qubit Pauli operators
\[
\Pab = P^{a_1 b_1}\otimes \cdots \otimes P^{a_n b_n}.
\]
This notation is convenient because multiplying Pauli operators is the same as modulo 2 addition on the labels up to a global phase $\Pab \Pcd = \kappa P^{\vec{a}\oplus\vec{c}, \vec{b}\oplus\vec{d}}$ where $\kappa$ is a complex phase factor.

We can decompose any $n$-qubit Pauli as a product of single qubit $X$ and $Z$ operators with a global phase factor
\begin{equation}
    P^{\vec{a},\vec{b}} = (-i)^{\vec{a}\cdot\vec{b}}\prod_{i=1}^n Z_i^{a_i}X_i^{b_i}.
    \label{eqn:PauliDecompose}
\end{equation}
Once we include multiplicative factors of $\pm 1$ and $\pm i$, these Pauli operators form a closed group under matrix multiplication known as the Pauli group, which can be written $\Pn = \{i^k \Pab | \vec{a}, \vec{b} \in \binset{n}, k \in \{0,1,2,3\} \}$. 

Our circuits will make use of measurements in the Bell basis. On two qubits, there are four Bell pair states
\begin{align*}
\ket{\Phi^{00}} = \frac{1}{\sqrt2}(\ket{00}+\ket{11})&,\quad
\ket{\Phi^{01}} = \frac{1}{\sqrt2}(\ket{01}+\ket{10}),\\
\ket{\Phi^{10}} = \frac{1}{\sqrt2}(\ket{00}-\ket{11})&,\quad
\ket{\Phi^{11}} = \frac{1}{\sqrt2}(\ket{01}-\ket{10}).
\end{align*}
By inspection, we see that any Bell pair can be written as a Pauli acting on one half of $\ket{\Phi^{00}}$:
\begin{equation}
\ket{\Phi^{ij}} = Z_1^i X_1^j\ket{\Phi^{00}}
= X_2^j Z_2^i \ket{\Phi^{00}}.
\label{eqn:SingleBellPauli}
\end{equation}
Generalizing to larger systems, we denote a Bell state on $2n$ qubits as
\[
|\Phi^{\vec{i},\vec{j}}\rangle
= \ket{\Phi^{i_1j_1}}_{1,n+1} \cdots \ket{\Phi^{i_n j_n}}_{n,2n}
\]
where $\vec{i},\vec{j} \in \binset{n}$ and $\ket{\Phi^{i_1j_1}}_{r,s}$ denotes a Bell pair on qubits $r$ and $s$. If we have a Bell state on two registers of $n$ qubits $\mathcal{A}$ and $\mathcal{B}$, we can write
\begin{equation}
|\Phi^{\vec{i},\vec{j}}\rangle_{\sysA\sysB}
= i^{\vec{i}\cdot \vec{j}} P_\sysA^{\vec{i},\vec{j}}|\Phi^{\vec{0},\vec{0}}\rangle_{\sysA\sysB}
= (-i)^{\vec{i}\cdot \vec{j}} P_\sysB^{\vec{i},\vec{j}}|\Phi^{\vec{0},\vec{0}}\rangle_{\sysA\sysB},
\label{eqn:BellPauli}
\end{equation}
which generalizes the observation in eq.~(\ref{eqn:SingleBellPauli}).
The state $|\Phi^{\vec{0},\vec{0}}\rangle$ is a maximally entangled state which, for simplicity, we will sometimes denote as $\mes{d}$, where $d$ is the dimension of each subsystem $\mathcal{A}$ and $\mathcal{B}$.
This state can also be written as
\begin{equation}
\mes{d} = \frac{1}{\sqrt d}\sum_{i=0}^{d-1} \ket{i,i}.
\end{equation}

The $n$-qubit Clifford group $\mathcal{C}_n$ is defined as the normalizer for $\mathcal{P}_n$:
\[
\mathcal{C}_n = \{U \in U(2^n) \, | \, U \mathcal{P}_n U^\dagger \subseteq \mathcal{P}_n\}.
\]
The Gottesman-Chuang hierarchy \cite{gottesman1999demonstrating} generalizes the Clifford group, and we denote the $k^{th}$ level of this hierarchy on $n$ qubits as $\mathcal{C}^{(k)}_n$. This hierarchy is defined recursively as
\[
\mathcal{C}_n^{(k)} = \{U \in U(2^n) \, | \, U \mathcal{P}_n U^\dagger \subseteq \mathcal{C}_n^{(k-1)} \},
\]
with the assumption that $\mathcal{C}_n^{(1)} = \Pn$. Note that the Clifford group $\Cn = \Cn^{(2)}$ is the second level in this hierarchy.

Without loss of generality, let us suppose $C \in \Cn$ is an unknown Clifford unitary such that
\begin{equation}
C Z_i C^\dagger = (-1)^{f_i} P^{\vec{a}_i, \vec{b}_i} \qquad 
C X_i C^\dagger = (-1)^{h_i} P^{\vec{c}_i, \vec{d}_i},
\label{eqn:CliffordBasis}
\end{equation}
where $\vec{a}_i, \vec{b}_i, \vec{c}_i,\vec{d}_i \in \binset{n}$ and $f_i, h_i \in \binset{}$. For any element $C$ of the Clifford group, we can decompose it as
\begin{equation}
C = \Ctil \Phf
\label{eqn:DecomposeC}
\end{equation}
where 
\begin{equation}
\Ctil Z_i \Ctil^\dagger =  P^{\vec{a}_i, \vec{b}_i}, \qquad 
\Ctil X_i \Ctil^\dagger =  P^{\vec{c}_i, \vec{d}_i},
\label{eqn:CtilDefinition}
\end{equation}
and $\vec{h}, \vec{f}$ have their elements determined by $h_i, f_i$ in eq.~(\ref{eqn:CliffordBasis}). Eq.~(\ref{eqn:DecomposeC}) follows from the fact that if $C$ flips the sign of $Z_i$ under conjugation (i.e., $f_i = 1$), this is the same as $\Phf$ having an $X$ Pauli operator on the $i^{th}$ qubit.

More formally, the Clifford decomposition in eq.~(\ref{eqn:DecomposeC}) is a manifestation of the fact that $\Ctil \in \Cn / \Pn$. We will use the fact that the quotient group $\Cn / \Pn$ is isomorphic to the symplectic group over the binary field $\mathbb{F}_2$ \cite{koenig2014efficiently}:
\[
\Cn / \Pn \cong \text{Sp}(2n, \mathbb{F}_2).
\]
We define the symplectic group as being $2n \times 2n$ matrices $S$ with entries in $\mathbb{F}_2$ such that
\begin{equation}
S^T \Lambda(n) S = \Lambda(n) \equiv 
\begin{pmatrix}
    0_n & I_n\\
    I_n & 0_n 
\end{pmatrix}.
\label{eqn:SymplecticDef}
\end{equation}
This relationship is equivalent to enforcing the constraint that $\Ctil$ must map Paulis $Z_i$ and $X_i$ in such a way that preserves their algebra, i.e., preserves all pairwise commutation and anti-commutation relationships. In this representation, we map $Z_i, X_i$ to standard basis elements of the vector space $\mathbb{F}_2^{2n}$
\[
Z_i = 
\begin{pmatrix}
    \vec{e}_i\\
    \vec{0}
\end{pmatrix}, \quad
X_i = 
\begin{pmatrix}
    \vec{0}\\
    \vec{e}_i
\end{pmatrix}.
\]
An arbitrary Pauli element up to multiplicative phase factor is therefore
\[
\Pab = 
\begin{pmatrix}
 %   |\\
    \vec{a}\\
 %   |\\
    \vec{b}\\
 %   |
\end{pmatrix},
\]
where $\vec{a}$ and $\vec{b}$ are understood to be column vectors.
 In the symplectic representation, the first $n$ columns of $S$ determine how each $Z_i$ gets mapped and the last $n$ columns determine how each $X_i$ gets mapped.

More concretely, we can write the transformation described by Clifford $\Ctil$ in eq.~(\ref{eqn:CtilDefinition}) as a symplectic matrix $S$
\[
S = 
\begin{pmatrix}
%    | &  & | & | &   & |\\
    \vec{a}_1 & \cdots & \vec{a}_n & \vec{c}_1 & \cdots & \vec{c}_n\\
%    | &  & | & | &   & |\\
\\[-8pt]
\vec{b}_1 & \cdots & \vec{b}_n & \vec{d}_1 & \cdots & \vec{d}_n\\
 %   | &  & | & | &   & |
\end{pmatrix}
\equiv
\begin{pmatrix}
    A & C\\
    B & D
\end{pmatrix},
\]
where $A,B,C,D \in \Ftwo^{n \times n}$ are matrices defined to have columns given by $\vec{a}_i, \vec{b}_i, \vec{c}_i, \vec{d}_i$ respectively. We use the symplectic relationship (\ref{eqn:SymplecticDef}) to write an explicit expression for the inverse of $S$, which is the symplectic representation for $\Ctil^\dagger$. Multiplying both sides of eq.~(\ref{eqn:SymplecticDef}) by $\Lambda$ gives us $\Lambda S^T \Lambda S = I$, which implies
\begin{equation}
S^{-1} = \Lambda S^T \Lambda = 
\begin{pmatrix}
    D^T & C^T\\
    B^T & A^T
\end{pmatrix}.
\label{eqn:SymplecticInv}
\end{equation}

A Pauli operator $P \in \Pn$ is said to stabilize a state $\ket{\psi}$ if $P\ket{\psi} = \ket{\psi}$. 
An $n$-qubit stabilizer state is a quantum state that can be prepared by applying a Clifford unitary to the state $\ket{0}^{\otimes n}$.
We denote as $\stab(\ket{\psi}) \subseteq \Pn$ the set of Pauli operators that stabilize $\kpsi$.
This subset of the Pauli group forms a subgroup under matrix multiplication and is called the stabilizer group for $\kpsi$.

\section{Learning Clifford Unitaries}

In this section, we describe our algorithm to learn an unknown $n$-qubit Clifford unitary $C$.
The algorithm is deterministic and guaranteed to learn the unknown Clifford with only $4n + 3$ queries to $C$.
%Our algorithm consists of two stages:
In the first stage of the algorithm, we use a subroutine based on the Twin-$C$ circuit depicted in fig.~\ref{fig:CC} to learn the symplectic representation $S$ used to describe $\Ctil$. In the second stage we use the circuit in fig.~\ref{fig:LearnPauli} to learn $\Phf$.
By eq.~(\ref{eqn:DecomposeC}), these two unitaries determine $C$.

\subsection{Algorithm}

To learn $S$ in the first stage of the algorithm, we need the following Lemma which shows that the measurement output of the Twin-$C$ circuit in fig.~\ref{fig:CC} is an affine transformation of the input bit string.

\begin{lemma}
    Let $S \in \Ftwo^{2n\times2n}$ be the symplectic representation for unknown Clifford quotient group element $\Ctil \in \Cn /\Pn$. If the two registers of the Twin-$C$ circuit are initialized with bit string $\J^T = (\vec{i}^T, \vec{j}^T)$ 
    then the output $\vec{K}^T = (\vec{k}^T, \vec{\ell}^T)$ will satisfy
    \begin{equation}
        \vec{K} = S \vec{J} + \vec{F}_0,
        \label{eqn:AffineTransform}
    \end{equation}
    where $\vec{F}_0\in \binset{2n}$ depends on $S$ but not $\vec{J}$.
    \label{lemma:AffineTransform}
\end{lemma}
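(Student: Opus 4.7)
The plan is to trace the state through the Twin-$C$ circuit and exploit the key structural fact that $(C\otimes C)$ sends each Bell state to a single Bell state (up to phase), which is what makes the measurement outcome deterministic. The Hadamard and CNOT blocks at either end of the circuit implement Bell-basis preparation (sending $|\vec{i}\rangle_\sysA |\vec{j}\rangle_\sysB$ to $|\Phi^{\vec{i},\vec{j}}\rangle_{\sysA\sysB}$) and Bell-basis measurement (reading off the bit-string label of whichever Bell state is present just before measurement), so the whole circuit is equivalent to applying $C\otimes C$ to $|\Phi^{\vec{i},\vec{j}}\rangle$ and reporting the resulting Bell-state label.

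The heart of the argument is to show that $(C\otimes C)|\Phi^{\vec{0},\vec{0}}\rangle$ is itself a Bell state. Since $|\Phi^{\vec{0},\vec{0}}\rangle$ is stabilized by $\{Z_r^{\sysA} Z_r^{\sysB},\, X_r^{\sysA} X_r^{\sysB}\}_{r=1}^n$, conjugating by $C\otimes C$ and using eq.~(\ref{eqn:CliffordBasis}) yields stabilizers $(P^{\vec{a}_r,\vec{b}_r})_\sysA (P^{\vec{a}_r,\vec{b}_r})_\sysB$ and $(P^{\vec{c}_r,\vec{d}_r})_\sysA (P^{\vec{c}_r,\vec{d}_r})_\sysB$, since the two factors of $(-1)^{f_r}$ (respectively $(-1)^{h_r}$) square away. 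Any product of these $2n$ generators has the form $\pm Q_\sysA Q_\sysB$ with the same Pauli $Q$ on both registers, and as the exponents vary over $\Ftwo^{2n}$, $Q$ sweeps out all of $\Pn$ by invertibility of $S$. A maximal abelian stabilizer group consisting entirely of $\pm P\otimes P$ operators is precisely that of a Bell state, so $(C\otimes C)|\Phi^{\vec{0},\vec{0}}\rangle = e^{i\theta}|\Phi^{\vec{F}_0}\rangle$ for some $\vec{F}_0\in \Ftwo^{2n}$ determined entirely by $S$.

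For general input $\vec{J}=(\vec{i}^T,\vec{j}^T)^T$, eq.~(\ref{eqn:BellPauli}) rewrites $|\Phi^{\vec{i},\vec{j}}\rangle$ as a phase times $P^{\vec{i},\vec{j}}_\sysA |\Phi^{\vec{0},\vec{0}}\rangle$. Commuting the Pauli through $C\otimes C$ via the symplectic action $C P^{\vec{i},\vec{j}} C^\dagger = \pm P^{\vec{x},\vec{y}}$, with $(\vec{x}^T,\vec{y}^T)^T = S\vec{J}$, and applying eq.~(\ref{eqn:BellPauli}) once more, we identify $P^{\vec{x},\vec{y}}_\sysA |\Phi^{\vec{F}_0}\rangle$ as proportional to the Bell state whose label is the $\Ftwo$-sum $S\vec{J}+\vec{F}_0$. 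The deterministic Bell measurement then outputs $\vec{K}=S\vec{J}+\vec{F}_0$, which is the affine relation claimed.

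The main obstacle is establishing the structural claim of the second paragraph—that $(C\otimes C)$ preserves the Bell basis rather than sending Bell states to generic superpositions. This is not a priori obvious, since $(C\otimes C)$ differs in general from $(C\otimes C^*)$ (which would preserve $|\Phi^{\vec{0},\vec{0}}\rangle$ trivially), and it relies on the cancellation $((-1)^{f_r})^2=1$ that arises precisely because the same Clifford is applied to both registers. That cancellation leaves the transformed stabilizer generators entirely of the form $P\otimes P$, exactly the structure that picks out a single Bell state. Once this is in hand, the rest is routine phase-bookkeeping using the symplectic representation of $\Ctil$.
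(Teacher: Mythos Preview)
Your argument is correct, but it is organized differently from the paper's. The paper tracks the stabilizer generators of $|\vec{i},\vec{j}\rangle$ gate-by-gate through the entire circuit (Hadamard, CNOT, $C\otimes C$, CNOT, Hadamard), lands on a tableau of $Z$-type generators, and then performs Gaussian elimination: the tableau matrix is $M=\begin{psmallmatrix}D^T & C^T\\ B^T & A^T\end{psmallmatrix}$, which by the symplectic inverse formula equals $S^{-1}$, so inverting it yields $\vec{K}=S\vec{J}+\vec{F}_0$ together with an explicit expression $\vec{F}_0=S\vec{F}$. You instead absorb the outer Hadamard/CNOT blocks into a Bell-basis change and analyze $(C\otimes C)$ acting on Bell states directly; the sign-cancellation $((-1)^{f_r})^2=1$ shows $(C\otimes C)|\Phi^{\vec{0},\vec{0}}\rangle$ is a Bell state, and then the symplectic action of $C$ on $P^{\vec{i},\vec{j}}$ gives the affine law without any tableau or matrix inversion. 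Your route is cleaner and closer in spirit to the paper's later Discussion (the ricochet-property computation of $P(\vec{K}\mid\vec{J})$ in Lemma~\ref{lemma:CircuitOutputDist}); the paper's route, while more mechanical, has the advantage of producing an explicit closed form for $\vec{F}_0$ in terms of the columns of $S$, which your argument only guarantees to exist and to depend on $S$ alone.
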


\begin{proof}
We calculate the effect of this circuit
\begin{align*}
\stab(|\vec{i}, \vec{j}\rangle) 
&= \langle (-1)^{i_1}Z_1, \ldots, (-1)^{j_1}Z_{n+1},\ldots \rangle\\
\xrightarrow{H^{\otimes n}\otimes I}  \,&\langle (-1)^{i_1}X_1, \ldots, (-1)^{j_1}Z_{n+1},\ldots \rangle\\
\xrightarrow{CNOT} \,&\langle (-1)^{i_1}X_1X_{n+1}, \ldots, (-1)^{j_1}Z_1 Z_{n+1},\ldots \rangle\\
\xrightarrow{C\otimes C} \,&\langle (-1)^{i_1} P^{\vec{c}_1,\vec{d}_1}_{\mathcal{A}}P^{\vec{c}_1,\vec{d}_1}_{\mathcal{B}}, \ldots, (-1)^{j_1}P^{\vec{a}_1,\vec{b}_1}_{\mathcal{A}} P^{\vec{a}_1,\vec{b}_1}_{\mathcal{B}},\ldots \rangle\\
\xrightarrow{CNOT} \,&\langle (-1)^{i_1 + \vec{c}_1\cdot \vec{d}_1} P^{\vec{0},\vec{d}_1}_{\mathcal{A}}P^{\vec{c}_1,\vec{0}}_{\mathcal{B}}, \ldots, \\
&\qquad\qquad (-1)^{j_1 + \vec{a}_1\cdot \vec{b}_1}P^{\vec{0},\vec{b}_1}_{\mathcal{A}} P^{\vec{a}_1,\vec{0}}_{\mathcal{B}},\ldots \rangle\\
\xrightarrow{H^{\otimes n} \otimes I} \,&\langle (-1)^{i_1 + \vec{c}_1\cdot \vec{d}_1} P^{\vec{d}_1,\vec{0}}_{\mathcal{A}}P^{\vec{c}_1,\vec{0}}_{\mathcal{B}}, \ldots, \\
&\qquad\qquad (-1)^{j_1 + \vec{a}_1\cdot \vec{b}_1}P^{\vec{b}_1,\vec{0}}_{\mathcal{A}} P^{\vec{a}_1,\vec{0}}_{\mathcal{B}},\ldots \rangle\\
&= \stab(\ket{\psi_f}).
\end{align*}
Because all generators of the stabilizer group are composed of $Z_i$ operators, the final state is a $Z$-basis eigenstate, and the measurement output is deterministic.
Let $\vec{k}, \vec{\ell} \in \binset{n}$ be the bit strings measured on registers $A$ and $B$.
For the $r^{th}$ qubit, we know either $Z_r$ or $-Z_r$ is in $\stab(\ket{\psi_f})$.
The sign in front of $Z_r$ determines whether $0$ or $1$ is measured on the $r^{th}$ qubit.
To calculate this sign, we write the generators for $\stab(\ket{\psi_f})$ as a matrix and keep track of the signs in the rightmost column while performing Gaussian elimination.
Writing the generators for $\stab(\ket{\psi_f})$ in tableau form gives
\begin{equation}
\left[
\begin{array}{c|c}
\ \vec{d}_1^T \qquad \vec{c}_1^T & i_1 + \vec{c}_1\cdot\vec{d}_1 \\
\vdots & \vdots \\
\ \vec{d}_n^T \qquad \vec{c}_n^T & i_n + \vec{c}_n\cdot\vec{d}_n \\
\vec{b}_1^T \qquad  \vec{a}_1^T  & j_1 + \vec{a}_1\cdot\vec{b}_1 \\
\vdots & \vdots \\
\ \vec{b}_n^T \qquad  \vec{a}_n^T  & j_n + \vec{a}_n\cdot\vec{b}_n 
\end{array}
\right] \equiv
\left[    
\begin{array}{c|c}
M & \vec{J} + \vec{F}
\end{array}
\right],
\label{eqn:Tableau}
\end{equation}
where 
\[M = 
\begin{pmatrix}
    D^T & C^T\\
    B^T & A^T\\
\end{pmatrix},
\quad
\vec{J} =
\begin{pmatrix}
     \vec{i}  \\
     \vec{j} 
\end{pmatrix},
\quad
\vec{F} = 
\begin{pmatrix}
    \vec{c}_1\cdot\vec{d}_1\\
    \vdots\\
    \vec{c}_n\cdot\vec{d}_n\\
    \vec{a}_1\cdot\vec{b}_1\\
    \vdots \\
    \vec{a}_n\cdot\vec{b}_n\\
\end{pmatrix}.
\]
We calculate the measurement outputs $\vec{k}, \vec{\ell}$ by inverting $M$ and applying it to the tableau in eq.~(\ref{eqn:Tableau})
\begin{equation*}
\begin{pmatrix}
\vec{k}  \\
 \vec{\ell}
\end{pmatrix}
=\K
= M^{-1}\vec{J} + M^{-1}\vec{F},
\end{equation*}
which implies the output measurement is an affine transformation of the input $\vec{J}$. To compute $M^{-1}$, observe that $M$ equals the formula for $S^{-1}$ in eq.~(\ref{eqn:SymplecticInv}). Hence, $M^{-1} = S$, where $S$ is the symplectic representation for the unknown $\Ctil$. Thus,
\begin{equation}
\K
= S
\J
+ \vec{F}_0,
\end{equation}
where $\vec{F}_0 = S\vec{F}$.

\end{proof}

%%%%%%%%%%%%%%%%%%%%%%%%%%%Fig.1
\begin{figure}[t]
\centering
\[
\hspace{2.5em}
\Qcircuit @C=1.1em @R=1.5em  {
    \lstick{|\vec{i}\rangle_{\mathcal{A}}} & \gate{\resizebox{2.5em}{!}{$H^{\otimes n}$}} & \ctrl{1} &  \gate{\resizebox{1.25em}{!}{$C$}}  & \ctrl{1} & \gate{\resizebox{2.5em}{!}{$H^{\otimes n}$}} & \meter&\cw & \rstick{\hspace{-1em}\vec{k}} \\
    \lstick{|\vec{j}\rangle_{\mathcal{B}}} & \qw & \targ & \gate{\resizebox{1.25em}{!}{$C$}} & \targ & \qw & \meter&\cw &\rstick{\hspace{-1em}\vec{\ell}}  
}
\]
\caption{[Twin-$C$ Circuit] The circuit is initialized with bit strings $\vec{i}, \vec{j} \in \binset{n}$. We query $C$ on both registers and perform a Bell basis measurement to obtain outcomes $\vec{k}$ and $\vec{\ell}$ on registers $\mathcal{A}$ and $\mathcal{B}$. In later sections, we use the same circuit but query some more general unknown unitary $U$ instead of the Clifford $C$. In such cases, we denote the circuit a Twin-$U$ circuit.}
\label{fig:CC}
\end{figure}
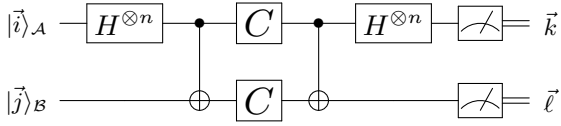
%%%%%%%%%%%%%%%%%%%%%%%%%%%%%%%

Lemma \ref{lemma:AffineTransform} shows that we can learn $\Ctil$ by executing the circuit of fig.~\ref{fig:CC} on basis bit strings of $\Ftwo^{2n}$.
To complete the learning algorithm, we recall that we can learn $P^{\vec{h},\vec{f}}$ with a single Bell basis measurement, as shown in Proposition 21 of ref.~\cite{montanaro2008quantum}.
We provide a proof to clarify the later stages of the learning task. 

\begin{lemma}[ref. \cite{montanaro2008quantum}]
An unknown $n$-qubit Pauli $P \in \paulis^{\otimes n}$ can be identified using a single query and a Bell basis measurement.
\label{lemma:DistinguishP}
\end{lemma}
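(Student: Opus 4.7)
The plan is to exploit the one-to-one correspondence between Paulis acting on one half of a maximally entangled state and elements of the Bell basis, as established in eq.~(\ref{eqn:BellPauli}). Concretely, I would prepare two $n$-qubit registers $\mathcal{A}$ and $\mathcal{B}$ in the maximally entangled state $|\Phi^{\vec{0},\vec{0}}\rangle_{\mathcal{A}\mathcal{B}}$, apply the unknown Pauli $P$ as an oracle to register $\mathcal{B}$, and then perform a Bell basis measurement on the joint $2n$-qubit system.

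Since $P \in \{I,X,Y,Z\}^{\otimes n}$, it has the form $P = P^{\vec{i},\vec{j}}$ for some $\vec{i},\vec{j}\in\binset{n}$, and the goal is to determine these bit strings. By eq.~(\ref{eqn:BellPauli}), applying $P_\mathcal{B}^{\vec{i},\vec{j}}$ to $|\Phi^{\vec{0},\vec{0}}\rangle_{\mathcal{A}\mathcal{B}}$ produces $(-i)^{\vec{i}\cdot\vec{j}}|\Phi^{\vec{i},\vec{j}}\rangle_{\mathcal{A}\mathcal{B}}$, which is (up to an irrelevant global phase) the Bell basis state labeled by $(\vec{i},\vec{j})$. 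Since the $4^n$ Bell states $\{|\Phi^{\vec{i},\vec{j}}\rangle\}_{\vec{i},\vec{j}}$ are mutually orthogonal and form a basis of the $2n$-qubit Hilbert space, the Bell basis measurement returns the pair $(\vec{i},\vec{j})$ deterministically, thereby identifying $P$.

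I do not anticipate any real obstacle here; the argument is essentially a rephrasing of eq.~(\ref{eqn:BellPauli}) together with the observation that the Bell basis measurement is a projective measurement onto an orthonormal basis, so its outcome is deterministic when the input is a Bell state. The only subtle point to mention is that the global phase $(-i)^{\vec{i}\cdot\vec{j}}$ picked up in eq.~(\ref{eqn:BellPauli}) does not affect the measurement statistics, so the labels $(\vec{i},\vec{j})$ are recovered without ambiguity, and hence the Pauli is learned with a single oracle query.
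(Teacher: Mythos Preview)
Your proof is correct and follows the same approach as the paper: prepare $|\Phi^{\vec{0},\vec{0}}\rangle$, apply $P$ to one register, and perform a Bell basis measurement to read off the Pauli label. The paper's version differs only in presentation---it explicitly tracks the stabilizer generators through the CNOT and Hadamard gates that implement the Bell measurement, landing on the computational-basis state $|\vec{a},\vec{b}\rangle$---whereas you invoke eq.~(\ref{eqn:BellPauli}) directly; the two arguments are equivalent.
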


\begin{proof}
First, create a maximally entangled state $\mes{2^n}$ on two registers of $n$ qubits.
Query $P$ on the second register to create the Choi state \cite{choi1975completely,jamiolkowski1972linear, khatri2020principles} 
$$\choi{2^n}{P} =  I\otimes P\mes{2^n}.$$
Without loss of generality, let $P = \Pab$ for some $\vec{a},\vec{b}\in \binset{n}$.
From eq.~(\ref{eqn:PauliDecompose}), we know $\Pab = (-i)^{\vec{a}\cdot\vec{b}}\prod_{i=1}^n Z_i^{a_i}X_i^{b_i}$. Then, measuring the Choi state in the Bell basis gives
\begin{align*}
    \stab(\mes{2^n}) = 
    &\,\langle X_1 X_{n+1}, \ldots, Z_1 Z_{n+1}, \ldots \rangle\\
    \xrightarrow{I\otimes P} &\,\langle (-1)^{a_1} X_1 X_{n+1}, \ldots, (-1)^{b_1}Z_1 Z_{n+1}, \ldots\rangle\\
    \xrightarrow{CNOT} &\,\langle (-1)^{a_1} X_1 , \ldots, (-1)^{b_1}Z_{n+1}, \ldots\rangle\\
    \xrightarrow{H^{\otimes n}\otimes I} &\,\langle (-1)^{a_1} Z_1 , \ldots, (-1)^{b_1}Z_{n+1}, \ldots\rangle\\
    = &\,\, \stab(|\vec{a}, \vec{b}\rangle).
\end{align*}
In the third and fourth lines, we have applied gates to transform into the Bell basis, and we see the final state is the $Z$-eigenstate $|\vec{a}, \vec{b}\rangle$.
Thus, we can determine $P$ up to global phase with a single $Z$-basis measurement by reading off the bit strings in the first and second registers.

\end{proof}

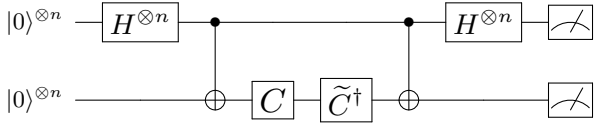
\begin{figure}[t]
\centering
\[
\hspace{2.5em}
\Qcircuit @C=1.1em @R=1.5em  {
    \lstick{|0\rangle^{\otimes n}} & \gate{\resizebox{2.5em}{!}{$H^{\otimes n}$}} & \ctrl{1} &  \qw & \qw  & \ctrl{1} & \gate{\resizebox{2.5em}{!}{$H^{\otimes n}$}} & \meter \\
    \lstick{|0\rangle^{\otimes n}} & \qw & \targ & \gate{\resizebox{1.1em}{!}{$C$}} & \gate{\resizebox{1.5em}{!}{$\Ctil^\dagger$}} & \targ & \qw & \meter
}
\]
\caption{Create the maximally entangled state, query $C$ on the second register, apply $\Ctil^\dagger$ after learning $\Ctil$, and perform a Bell basis measurement.}
\label{fig:LearnPauli}
\end{figure}

We now prove our main result regarding the process tomography algorithm:

\begin{theorem}
Given oracle access to an unknown Clifford unitary $C \in \Cn$, we can determine $C$ up to global phase with  $4n + 3$ queries to $C$ and $\mathcal{O}(\frac{n^2}{\log n})$ gates.
\end{theorem}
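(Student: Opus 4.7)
The plan is to chain the two lemmas together: use Lemma \ref{lemma:AffineTransform} to recover the symplectic representation $S$ of $\Ctil$, then use the circuit in fig.~\ref{fig:LearnPauli} together with Lemma \ref{lemma:DistinguishP} to recover the residual Pauli $\Phf$, and finally invoke the decomposition $C = \Ctil\Phf$ from eq.~(\ref{eqn:DecomposeC}).

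First, I would execute Stage 1 as follows. Run the Twin-$C$ circuit of fig.~\ref{fig:CC} once with input $\J = \vec{0}$; by Lemma \ref{lemma:AffineTransform} the (deterministic) output reveals the offset $\vec{F}_0$. Then run the circuit $2n$ more times, once with each standard basis vector $\J = \vec{e}_r$ for $r = 1,\ldots,2n$. Each such run returns $S\vec{e}_r + \vec{F}_0$, so subtracting the previously measured $\vec{F}_0$ gives the $r$\textsuperscript{th} column of $S$, and after $2n+1$ runs the entire symplectic matrix $S$ is known. Since $\Cn/\Pn \cong \mathrm{Sp}(2n,\Ftwo)$, this pins down $\Ctil$ up to a Pauli, which is harmless because that Pauli ambiguity is absorbed into the next stage. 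Each Twin-$C$ run consumes two queries to $C$, giving $4n+2$ queries in Stage 1.

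For Stage 2, I would run the circuit in fig.~\ref{fig:LearnPauli} exactly once. Using $C = \Ctil\Phf$, the action on the maximally entangled state is
\begin{equation*}
(I \otimes \Ctil^\dagger C)\mes{2^n} = (I \otimes \Phf)\mes{2^n},
\end{equation*}
which is precisely the Choi state of $\Phf$. By Lemma \ref{lemma:DistinguishP}, a single Bell basis measurement recovers $\vec{h}$ and $\vec{f}$. This stage uses one additional query to $C$, bringing the total to $4n+3$ queries, matching the claim.

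Finally, for the gate count, I would argue that all circuit components apart from $\Ctil^\dagger$ are cheap: the Bell state preparation and Bell basis measurement circuits in figs.~\ref{fig:CC} and \ref{fig:LearnPauli} use only $\mathcal{O}(n)$ gates per run, and the classical post-processing to solve for $S$ and $\vec{F}_0$ is polynomial in $n$. The only costly ingredient is implementing $\Ctil^\dagger$ in Stage 2 once $S$ is known; I would cite the known result that an arbitrary $n$-qubit Clifford can be synthesized using $\mathcal{O}(n^2/\log n)$ elementary gates (e.g.\ via the Aaronson--Gottesman tableau method combined with asymptotically optimal CNOT-network synthesis), which dominates the gate count. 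The main obstacle I anticipate is bookkeeping rather than conceptual: one must be careful that the Pauli ambiguity left behind when $S$ determines $\Ctil$ only ``up to $\Pn$'' does not corrupt Stage 2, which is exactly why the composition $\Ctil^\dagger C$ still evaluates to a Pauli and Lemma \ref{lemma:DistinguishP} applies verbatim.
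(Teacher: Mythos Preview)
Your proposal is correct and follows essentially the same approach as the paper: run the Twin-$C$ circuit on $\vec{0}$ and the $2n$ standard basis vectors to recover $S$ (hence $\Ctil$) via Lemma~\ref{lemma:AffineTransform}, then apply $\Ctil^\dagger$ and Lemma~\ref{lemma:DistinguishP} once to read off $\Phf$, invoking Aaronson--Gottesman for the $\mathcal{O}(n^2/\log n)$ compilation of $\Ctil^\dagger$. Your additional remark about the Pauli ambiguity in the choice of $\Ctil$ being absorbed harmlessly into Stage~2 is a nice clarification that the paper leaves implicit by fixing the specific representative in eq.~(\ref{eqn:CtilDefinition}).
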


\begin{proof}
Recall from eq.~(\ref{eqn:DecomposeC}), that up to global phase we can write $C = \Ctil \Phf$ for some $\Ctil \in \Cn /\Pn$ and $n$-qubit Pauli $\Phf$.
Without loss of generality, assume the action of $\Ctil$ on Pauli operators is given by eq.~(\ref{eqn:CtilDefinition}).

In the first stage of the algorithm, we learn $\Ctil$ using the Twin-$C$ circuit of fig.~\ref{fig:CC}.
We execute the circuit on input $\vec{J}_0^T = (\vec{0 }^T, \vec{0}^T)$ to obtain output $\vec{K}_0$.
By Lemma \ref{lemma:AffineTransform}, this will output $\vec{K}_0 = \vec{F}_0$.
We run the circuit $2n$ more times on inputs $\vec{J}_1, \ldots, \vec{J}_{2 n}$ where $\vec{J}_i =\vec{e}_i $.
% \[
% \vec{J}_i
% =
% \begin{cases}
%     \,\,\,
%     \begin{pmatrix}
%     c\\
%     \vec{0}
%     \end{pmatrix} \text{ if } 1 \leq i \leq n\\
%     \\
%     \begin{pmatrix}
%     \vec{0} \\
%     \vec{e}_{i-n}
%     \end{pmatrix}
%     \text{ if } n < i \leq 2n.
% \end{cases}
% \]
Let $\vec{K}_1, \ldots, \vec{K}_{2n}$ be the measurement outputs from these circuit runs.
Add $\vec{K}_0$ to each output to obtain a new set of bit strings $\vec{K}^\prime_1,\ldots,\vec{K}_{2n}^\prime$.
By Lemma \ref{lemma:AffineTransform}, we conclude that $\vec{K}_i^\prime$ is the $i^{th}$ column of $S$
\[
\vec{K}^\prime_i \equiv \vec{K}_i +\vec{K}_0 = S \vec{J}_i,
\]
%Once we have learned the columns of $S$, we have completely learned $\Ctil$.
thereby determining $\Ctil$.
The $2n +1$ runs of circuit \ref{fig:CC} employ $4n + 2$ queries of $C$.

In the second stage of the algorithm, we use Lemma \ref{lemma:DistinguishP} and the circuit of fig.~\ref{fig:LearnPauli} to learn $\Phf$.
Since we have learned $\Ctil$, we can compute its inverse using eq.~(\ref{eqn:SymplecticInv}) and compile the inverse circuit using $\mathcal{O}(\frac{n^2}{\log n})$ gates \cite{aaronson2004improved}.
Querying $C$ and applying $\Ctil^\dagger$ is equivalent to querying the unknown Pauli $\Phf$:
\[
\Ctil^\dagger C = \Ctil^\dagger (\Ctil \Phf) = \Phf.
\]
From Lemma \ref{lemma:DistinguishP}, the circuit in fig.~\ref{fig:LearnPauli} outputs $\vec{h}$ on the first register and $\vec{f}$ on the second register, determining $\Phf$. 
Together with $\Ctil$, this completes the identification of $C$.
The circuit in fig.~\ref{fig:LearnPauli} queries $C$ one additional time, so the total query complexity for the algorithm is $4n + 3$.

\end{proof}

It has been proven \cite{low2009learning} that learning an unknown $n$-qubit Clifford unitary requires at least $n$ queries to the unknown Clifford. Our algorithm, which uses $\mathcal{O}(n)$ queries, therefore has asymptotically optimal query complexity.

\subsection{Discussion}

Low's algorithm \cite{low2009learning} used $C$ and $C^\dagger$ to query unknown Paulis $C Z_i C^\dagger$ and $C X_i C^\dagger$ on one half of a maximally entangled state, which could then be learned with a single measurement using Lemma \ref{lemma:DistinguishP}.
Our algorithm queries $C$ on both halves of $|\Phi^{\ij}\rangle$ which, with the ricochet property of maximally entangled states, produces the effect of querying $C P^{\ij}C^T$ on one half of the maximally entangled state.
This still allows us to complete the learning task because the transpose is the same as the inverse up to the multiplication by a Pauli:

\begin{lemma}
Let $C \in \Cn$ be an $n$-qubit Clifford unitary that satisfies eq.~(\ref{eqn:CliffordBasis}). Then, 
\begin{equation}
C^* = e^{i\theta} C P^{\vecal, \vec{\beta}},
\label{eqn:StarTransform}    
\end{equation}
where $e^{i\theta}$ is a phase factor, $\alpha_i = \vec{c}_i \cdot \vec{d}_i$, $\beta_i = \vec{a}_i \cdot \vec{b}_i$, and the complex conjugate is in the computational basis.
\end{lemma}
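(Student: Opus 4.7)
The plan is to verify eq.~(\ref{eqn:StarTransform}) by checking that $C^\star$ and $CP^{\albet}$ induce the same conjugation action on every Pauli generator $Z_i, X_i$, and then to invoke a Schur-type argument to conclude that they agree up to a global phase.

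First I would establish the basic identity
\[
(P^{\vec{a},\vec{b}})^\star = (-1)^{\vec{a}\cdot\vec{b}}\,P^{\vec{a},\vec{b}},
\]
which follows directly from eq.~(\ref{eqn:PauliDecompose}) because $Z$ and $X$ are real in the computational basis while the prefactor $i^{\vec{a}\cdot\vec{b}}$ conjugates to $(-i)^{\vec{a}\cdot\vec{b}}$. Complex-conjugating eq.~(\ref{eqn:CliffordBasis}) on both sides and using $Z_i^\star = Z_i$, $X_i^\star = X_i$, and $(C^\dagger)^\star = (C^\star)^\dagger$ then yields
\begin{align*}
C^\star Z_i (C^\star)^\dagger &= (-1)^{f_i + \vec{a}_i\cdot\vec{b}_i}\,P^{\vec{a}_i,\vec{b}_i},\\
C^\star X_i (C^\star)^\dagger &= (-1)^{h_i + \vec{c}_i\cdot\vec{d}_i}\,P^{\vec{c}_i,\vec{d}_i}.
\end{align*}

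Next I would compute the analogous conjugations with $CP^{\albet}$. Since $P^{\albet}$ anticommutes with $Z_i$ precisely when $\beta_i = 1$ and with $X_i$ precisely when $\alpha_i = 1$, we have $P^{\albet} Z_i (P^{\albet})^\dagger = (-1)^{\beta_i} Z_i$ and $P^{\albet} X_i (P^{\albet})^\dagger = (-1)^{\alpha_i} X_i$. Composing with eq.~(\ref{eqn:CliffordBasis}) then gives overall sign factors $(-1)^{f_i + \beta_i}$ on $Z_i$ and $(-1)^{h_i + \alpha_i}$ on $X_i$, which match the $C^\star$ expressions exactly under the stated assignment $\alpha_i = \vec{c}_i\cdot\vec{d}_i$, $\beta_i = \vec{a}_i\cdot\vec{b}_i$.

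Finally, two unitaries that conjugate every generator of $\Pn$ identically differ by an operator commuting with all of $\Pn$. Because $\Pn$ spans the operator algebra, this operator is a scalar, and unitarity forces the scalar to be a phase $e^{i\theta}$, giving eq.~(\ref{eqn:StarTransform}). I do not anticipate a serious obstacle: the only delicate bookkeeping is keeping the $i^{\vec{a}\cdot\vec{b}}$ convention of eq.~(\ref{eqn:PauliDecompose}) straight under complex conjugation, and the declared choice of $(\vecal,\vec{\beta})$ is visibly dictated by cancelling precisely those signs.
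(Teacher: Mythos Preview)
Your proposal is correct and follows essentially the same route as the paper: conjugate eq.~(\ref{eqn:CliffordBasis}) using $(P^{\vec{a},\vec{b}})^\star = (-1)^{\vec{a}\cdot\vec{b}}P^{\vec{a},\vec{b}}$, compare with the conjugation action of $CP^{\albet}$ via the anticommutation rules, and conclude equality up to phase from the fact that identical conjugation on all of $\Pn$ forces a scalar. The only cosmetic difference is that the paper phrases the final step as ``identical transformations on any density matrix'' rather than your Schur-type commutant argument, but these are equivalent.
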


\begin{proof}
Taking the complex conjugate in the computational basis of the equations in (\ref{eqn:CliffordBasis}), we obtain
\[
C^* Z_i^* C^{*\dagger} = (-1)^{f_i} (P^{\vec{a}_i, \vec{b}_i})^*, \, 
C^* X_i C^{*\dagger} = (-1)^{h_i} (P^{\vec{c}_i, \vec{d}_i})^*.
\]
From the definition in eq.~(\ref{eqn:PauliDecompose}), we know $(P^{\vec{a}_i, \vec{b}_i})^* = (-1)^{\vec{a}_i \cdot \vec{b}_i} P^{\vec{a}_i, \vec{b}_i}$.
So the above equations can be simplified to
\begin{align*}
C^* Z_i C^{*\dagger} = (-1)^{f_i + \vec{a}_i \cdot \vec{b}_i} P^{\vec{a}_i, \vec{b}_i},\\ 
C^* X_i C^{*\dagger} = (-1)^{h_i + \vec{c}_i\cdot \vec{d}_i} P^{\vec{c}_i, \vec{d}_i}.
\end{align*}
Let $\vecal, \vec{\beta} \in \binset{n}$ be bit strings where the $i^{th}$ element is 
\[
\alpha_i = \vec{c}_i \cdot \vec{d}_i, \quad
\beta_i = \vec{a}_i \cdot \vec{b}_i.
\]
Let us define the unitary $U = C P^{\albet}$.
Conjugating the Pauli $Z_i$ with this unitary gives
\begin{align*}
U Z_i U^\dagger 
& = C P^{\albet} Z_i P^{\albet}C^\dagger\\
&= (-1)^{\beta_i} C Z_i C^\dagger\\
&= (-1)^{\vec{a}_i \cdot \vec{b}_i + f_i} P^{\vec{a}_i, \vec{b}_i}\\
&= C^* Z_i C^{*\dagger},
\end{align*}
where the second line follows from the fact that $P^{\albet}$ anticommutes with $Z_i$ if and only if $\beta_i = \vec{a}_i \cdot \vec{b}_i = 1$.
By analogous reasoning, it is also true that $U X_i U^\dagger = C^* X_i C^{*\dagger}$.
It follows that $U$ and $C^*$ produce identical transformations on any density matrix $\rho$.
These two unitaries, therefore, can differ by at most a global phase,
and we conclude $C^* = e^{i\theta} U$.

\end{proof}

To demonstrate more clearly the use of $C^T$ versus $C^\dagger$, we prove a more general result to be used in the following section when the unknown unitary is not in the Clifford group.
Suppose the circuit in fig.~\ref{fig:CC} queries an unknown unitary $U$.
When $U \notin \Cn$, the output will not be deterministic.
We refer to this variant of the circuit as a Twin-$U$ circuit.
Let $\vec{J}$ be the input bit string to this circuit and let $\vec{K}$ be the output, which we now treat as a random variable.
Let $P(\K = \kl|\J= \ij)$ denote the probability of measuring $\kl$ when initializing the circuit with bit strings $\ij$\footnote{Formally, $\J,\K \in \binset{2n}$ are column vectors, but for convenience we will write $\J = \ij$, where it is understood we mean a column vector formed by concatenating $\vec{i}$ and $\vec{j}$.}.

\begin{lemma}
Let $U \in U(2^n)$ be an $n$-qubit unitary. The outcome distribution $P(\K = \kl|\J= \ij)$ for the Twin-$U$ circuit will be
\begin{equation}
    P(\K = \kl|\J= \ij) = \frac{1}{d^2}\left|\Tr[U P^{\ij}U^T P^{\kl}]\right|^2.
    \label{eqn:CircuitOutputDist}
\end{equation}
\label{lemma:CircuitOutputDist}
\end{lemma}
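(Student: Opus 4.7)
The plan is to compute the output probability directly by interpreting the Twin-$U$ circuit as first preparing a Bell state on input $\ij$, then applying $U \otimes U$, then performing a Bell basis measurement on registers $\sysA, \sysB$. First I would verify that the initial $H^{\otimes n}$ followed by the CNOTs acts on a computational basis input $|\vec{i},\vec{j}\rangle$ to produce the Bell state $|\Phi^{\ij}\rangle$ (a direct calculation on basis states). By symmetry, the mirror image operations just before the terminal computational-basis measurement implement a Bell basis measurement, so outcome $\kl$ corresponds to projecting onto $|\Phi^{\kl}\rangle$. Therefore
\[
P(\K = \kl \mid \J = \ij) = \bigl|\langle \Phi^{\kl}| (U \otimes U) |\Phi^{\ij}\rangle\bigr|^2.
\]

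Next, I would use eq.~(\ref{eqn:BellPauli}) to rewrite $|\Phi^{\ij}\rangle = (-i)^{\vec{i}\cdot\vec{j}} (P_\sysA^{\ij} \otimes I)|\Phi^{\vec{0},\vec{0}}\rangle$ and dually for $\langle \Phi^{\kl}|$. This pulls out an overall phase factor (which will disappear after taking the modulus squared) and leaves an inner product of the form $\langle \Phi^{\vec{0},\vec{0}}|\,(P^{\kl} U P^{\ij}) \otimes U\,|\Phi^{\vec{0},\vec{0}}\rangle$. Then I would apply the standard ricochet/overlap identity
\[
\langle \Phi^{\vec{0},\vec{0}}|A \otimes B|\Phi^{\vec{0},\vec{0}}\rangle = \tfrac{1}{d}\Tr[A B^T],
\]
which follows by expanding $|\Phi^{\vec{0},\vec{0}}\rangle = \frac{1}{\sqrt{d}}\sum_i |i,i\rangle$ in the computational basis. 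This converts the amplitude into $\tfrac{1}{d}\Tr[P^{\kl} U P^{\ij} U^T]$ up to the collected phase. Taking $|\cdot|^2$ annihilates all phases and cyclicity of the trace gives exactly eq.~(\ref{eqn:CircuitOutputDist}).

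The computation is essentially routine, so the main thing to watch is bookkeeping: one must be careful that the $(-i)^{\vec{i}\cdot \vec{j}}$ and $i^{\vec{k}\cdot\vec{\ell}}$ prefactors from eq.~(\ref{eqn:BellPauli}) are pure phases so that they drop out of $|\,\cdot\,|^2$, and that the transpose (rather than the conjugate transpose) appears in the ricochet identity — which is precisely the feature that explains why $U^T$, not $U^\dagger$, shows up in the statement. As a cross-check, one can specialize to $U = C \in \Cn$ and verify that the distribution collapses to a delta function consistent with Lemma~\ref{lemma:AffineTransform}, using the fact (from the discussion preceding eq.~(\ref{eqn:StarTransform})) that $CP^{\ij}C^T$ is, up to a phase, a Pauli operator.
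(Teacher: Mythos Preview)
Your proposal is correct and follows essentially the same approach as the paper: both start from $|\langle\Phi^{\kl}|(U\otimes U)|\Phi^{\ij}\rangle|^2$, peel off the Paulis using eq.~(\ref{eqn:BellPauli}), and reduce to a trace via the ricochet/overlap property of $\mes{d}$. The only cosmetic difference is that the paper places the Paulis on register $\sysB$ and ricochets $U_\sysA$ over to $U_\sysB^T$ before invoking $\Tr[X] = d\langle\Phi_d|I\otimes X|\Phi_d\rangle$, whereas you place the Paulis on $\sysA$ and invoke the equivalent identity $\langle\Phi_d|A\otimes B|\Phi_d\rangle = \tfrac{1}{d}\Tr[AB^T]$ directly; these are two forms of the same computation.
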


\begin{proof}
We use eq.~(\ref{eqn:BellPauli}) and the ricochet property to calculate
\begin{align*}
P(\K=&\kl |\J= \ij) = |\langle\Phi^{\kl}| U\otimes U |\Phi^{\ij}\rangle|^2\\
&= |\langle\Phi_{d}| (I\otimes P^{\kl})U\otimes U (I\otimes P^{\ij})|\Phi_{d}\rangle|^2\\
&= |\langle\Phi_{d}| (I\otimes P^{\kl})I\otimes U (I\otimes P^{\ij})(I\otimes U^T)|\Phi_{d}\rangle|^2\\
& = \frac{1}{d^2}\left|\Tr[ P^{\kl}U P^{\ij}U^T]\right|^2,
\end{align*}
where the last line follows from the fact $\Tr[X] = d \langle\Phi_{d}|I\otimes X\mes{d}$.

\end{proof}

We now demonstrate how the Pauli term $P^{\albet}$ in eq.~(\ref{eqn:StarTransform}) leads to the constant term $\vec{F}_0$ in the affine transformation of eq.~(\ref{eqn:AffineTransform}). Consider the case where $U = C$ for some $C \in \Cn$ and apply (\ref{eqn:StarTransform}) to the expression (\ref{eqn:CircuitOutputDist})
\begin{align}
P(\K = \kl |\J =  \ij) 
&= \frac{1}{d^2}\left|\Tr[ P^{\kl} C P^{\ij} C^{* \dagger}]\right|^2\nonumber\\
&= \frac{1}{d^2}\left|\Tr[ P^{\kl} C P^{\ij}(C P^{\albet})^\dagger]\right|^2\nonumber\\
& = \frac{1}{d^2}\left|\Tr[ P^{\kl} C P^{\ij}C^\dagger C P^{\albet}C^\dagger]\right|^2\nonumber.
\end{align}
Without loss of generality, suppose $C P^{\albet}C^\dagger = (-1)^m P^{\gamlam}$ for some integer $m$ and bit strings $\gamlam \in \binset{n}$. We conclude
\begin{equation}
P(\K = \kl |\J= \ij) 
= \frac{1}{d^2}\left|\Tr[ P^{\vec{k} + \vec{\gamma}, \vec{\ell}+\vec{\lambda}}C P^{\ij}C^\dagger]\right|^2,
\end{equation}
where $\gamlam$ are fixed bit strings that depend only on $C $ and not $\ij,\kl$.
If we execute the circuit in fig.~\ref{fig:CC} with $\vec{i}=\vec{j}=\vec{0}$, the output will be $\gamlam$:
\[
P(\K =\kl |\J= \vec{0},\vec{0}) =  \frac{1}{d^2}\left|\Tr[ P^{\vec{k} + \vec{\gamma}, \vec{\ell}+\vec{\lambda}}]\right|^2 = \delta_{\vec{k},\vec{\gamma}}\delta_{\vec{\ell},\vec{\lambda}}.
\]
This is equivalent to learning $\vec{F}_0$ in eq.~(\ref{eqn:AffineTransform}). 
To learn how $\Ctil$ transforms $Z_r$, we initialize the circuit with $\vec{i} = \vec{e}_r, \vec{j}= \vec{0}$ and will measure $\vec{a}_r + \vec{\gamma}, \vec{b}_r + \vec{\lambda}$ because
\[
P(\K=\kl |\J= \vec{e}_r,\vec{0}) = \delta_{\vec{k}, \vec{a}_r + \vec{\gamma}}
\delta_{\vec{\ell}, \vec{b}_r + \vec{\lambda}}.
\]
Our algorithm simply adds the bit strings $\gamlam$ we measured with all qubits initialized to zero to learn $\vec{a}_r, \vec{b}_r$.
Having demonstrated the connection with Low's algorithm, in the next section we use eq.~(\ref{eqn:CircuitOutputDist}) to calculate the query complexity of learning the Clifford element closest to $U \notin\Cn$.

\section{Learning The Closest Clifford To An Unknown Unitary}

To extend the algorithm to the case with $U \notin\Cn$, we first need to quantify a measure of closeness of unitaries. Following \cite{low2009learning}, we use the quantity
\begin{equation}
    D(U_1, U_2) = \frac{1}{\sqrt{2d^2}} \norm{U_1\otimes U_1^* - U_2 \otimes U_2^*}_2,
\end{equation}
as a phase-insensitive measure of ``distance" between two $d\times d$ unitaries,
where $\norm{A}_2 = \sqrt{\Tr[A^\dagger A]}$ is the Frobenius norm, and $d=2^n$ in the applications below.
Expanding the norm gives the alternative expression
\begin{equation}
D(U_1, U_2) = \sqrt{1 - \frac{1}{d^2}\left|\Tr[U_1 U_2^\dagger]\right|^2},
\label{eqn:DUU}
\end{equation}
closely reflecting the distribution of $P(\K = \kl |\J= \ij)$ in eq.~(\ref{eqn:CircuitOutputDist}).
Since $D(U_1, U_2) = 0$ implies only that $U_1$ and $U_2$ are the same up to an unobservable global phase (rather than implying $U_1 = U_2$),
$D$ is not a true distance.
$D$ nonetheless obeys the triangle inequality and is unitarily invariant.

We next turn to an algorithm for learning the Clifford element $C$ closest to an unknown unitary $U$ (i.e., the unitary $C \in \Cn$ which minimizes $D(U,C)$).
For this problem to be well-defined, the closest $C$ must be unique. 
Lemma 12 from \cite{low2009learning} gives a sufficient condition:
\begin{lemma}[ref. \cite{low2009learning}]
    Let $U \in U(2^n)$ and $C \in \Cn^{(k)}$ satisfy $D(U,C) < \frac{\sqrt2}{2^k} \equiv \eps{k} $. Then $C \in \Cn^{(k)}$ is the unique closest element to $U$ up to global phase.   
    \label{lemma:LowUniqueness}
\end{lemma}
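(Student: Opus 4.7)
The plan is to argue by induction on $k$, establishing the contrapositive: no two distinct (up to global phase) elements of $\Cn^{(k)}$ both lie within $\eps{k}$ of $U$. For the base case $k=1$, I would expand $U$ in the Hilbert-Schmidt orthonormal basis $\{P/\sqrt{d} : P \in \paulis^{\otimes n}\}$ and invoke the Parseval identity $\sum_P |\Tr[U P^\dagger]|^2 / d^2 = 1$ that follows from unitarity of $U$. Since $D(U, P_i)^2 < 1/2$ is equivalent to $|\Tr[U P_i^\dagger]|^2 / d^2 > 1/2$, two distinct (up to phase) Paulis cannot simultaneously satisfy this without violating Parseval.

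For the inductive step, I would assume the statement at level $k-1$ and suppose $C_1, C_2 \in \Cn^{(k)}$ both satisfy $D(U, C_i) < \eps{k}$. The key intermediate tool is a ``contraction'' inequality
\[
D(U P U^\dagger, C P C^\dagger) \leq 2\, D(U, C)
\]
valid for every Pauli $P$ and unitary $C$. To prove it I set $V = U C^\dagger$ and $Q = C P C^\dagger$, rewrite the left side as $D(V Q V^\dagger Q^\dagger, I)$ using unitary invariance, factor $V Q V^\dagger Q^\dagger = V \cdot (Q V^\dagger Q^\dagger)$, and apply the triangle inequality together with $D(Q V^\dagger Q^\dagger, I) = D(V, I)$ to obtain the bound $2\, D(V, I) = 2\, D(U, C)$. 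Applied to each $C_i$, this yields $D(U P U^\dagger, C_i P C_i^\dagger) < 2\eps{k} = \eps{k-1}$. Because $C_i P C_i^\dagger \in \Cn^{(k-1)}$ by the definition of the hierarchy, the inductive hypothesis forces $C_1 P C_1^\dagger = e^{i\phi(P)} C_2 P C_2^\dagger$ for every Pauli $P$.

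It remains to upgrade these pointwise relations into the global equality $C_1 = e^{i\theta} C_2$. Setting $W = C_2^\dagger C_1$, they become $W P W^\dagger = e^{i\phi(P)} P$. Comparing spectra (every non-identity Pauli has eigenvalues $\pm 1$) forces $e^{i\phi(P)} = \pm 1$, so the resulting sign pattern defines a $\{\pm 1\}$-character on $\Pn / Z(\Pn) \cong \Ftwo^{2n}$. By nondegeneracy of the symplectic form tracking Pauli commutation, every such character is realized by conjugation against a unique Pauli $P_0$, so $W P_0^{-1}$ commutes with every Pauli and Schur's lemma yields $W = \lambda P_0$, i.e., $C_1 = \lambda C_2 P_0$. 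Unitary and phase insensitivity of $D$ then give $D(C_1, C_2) = D(P_0, I)$, which equals $1$ unless $P_0$ is proportional to the identity. The triangle inequality, however, bounds $D(C_1, C_2) < 2\eps{k} \leq \sqrt{2}/2 < 1$ whenever $k \geq 2$, forcing $P_0$ to be central and $C_1$ to coincide with $C_2$ up to global phase.

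The main obstacle I anticipate is this last algebraic step: upgrading the family of pointwise identities $W P W^\dagger = \pm P$ into a global statement about $W$. It requires identifying the sign character with a Pauli conjugation (via nondegeneracy of the symplectic form), applying Schur's lemma to conclude $W$ is itself a Pauli up to phase, and then using a traceless-Pauli trace computation to eliminate the nontrivial $P_0$ case against the triangle-inequality bound.
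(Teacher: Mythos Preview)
The paper does not include a proof of this lemma: it is quoted verbatim as ``Lemma~12 from ref.~\cite{low2009learning}'' and used as a black box, so there is no in-paper argument to compare against. Your proposal is essentially a reconstruction of Low's original inductive proof, and it is correct. The base case via Parseval is fine; your contraction bound $D(UPU^{\dagger},CPC^{\dagger})\le 2D(U,C)$ is the $\dagger$-analogue of Lemma~\ref{lemma:UPU} here and is proved the same way; and the final step---recognizing that $W=C_2^{\dagger}C_1$ acts on each Pauli by a sign, identifying the sign character with conjugation by a unique Pauli $P_0$ via nondegeneracy of the symplectic form, and then eliminating nontrivial $P_0$ because $D(C_1,C_2)<2\eps{k}\le \sqrt{2}/2<1$ while $D(P_0,I)=1$ for any traceless Pauli---goes through as you describe. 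The only cosmetic point is that once you have established that at most one element of $\Cn^{(k)}$ lies strictly inside the $\eps{k}$-ball, it follows immediately that this element is the global minimizer, since every other $C'$ satisfies $D(U,C')\ge\eps{k}>D(U,C)$; you may want to state that explicitly.
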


%\subsection*{Algorithm}

In practice, it can be difficult to implement a unitary on real hardware that is exactly in the Clifford group, due to the effects of noise.
To find the closest Clifford element to an unknown unitary, we proceed in two stages: first, execute the Twin-$U$ circuit to learn the symplectic representation $S$, and hence $\Ctil$, associated to the closest Clifford element $C$.
Second, compile the inverse $\Ctil^\dagger$ to query the unknown Pauli $\Phf$.
The circuit output will no longer be deterministic for these two stages,
but we can learn the deterministic output of the closest Clifford by  executing the Twin-$U$ circuit multiple times and taking the majority vote. 

We first establish that this majority vote scheme works to learn the closest Pauli to an unknown unitary $U$.
The result is originally proven in Proposition 22 of ref.~\cite{montanaro2008quantum}, but our proof applies for the different definition of Pauli closeness employed here.

\begin{lemma}[ref.~\cite{montanaro2008quantum}]
Given oracle access to an unknown unitary $U \in U(2^n)$, suppose $P \in \Pn$ satisfies $ D(U, P) \leq \epsilon < \eps{1}$.
Then $P$ can be determined (up to global phase) with success probability at least $1-\delta$ using 
$\mathcal{O}\left(\smash{\frac{\log(1/\delta)}{(\eps{1} - \epsilon)^2}}\right)$
queries to $U$.
\label{lemma:LearnCloseP}
\end{lemma}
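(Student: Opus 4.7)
The plan is to run the Choi-state Bell-measurement circuit of Lemma~\ref{lemma:DistinguishP} with $U$ in place of $P$, observe that the correct Pauli label is still the unique most-probable outcome, and then boost the confidence to $1-\delta$ by taking the plurality over $M$ independent trials.

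First, I would prepare $\mes{2^n}$, apply $I \otimes U$, and measure in the Bell basis. By the ricochet property used in the proof of Lemma~\ref{lemma:CircuitOutputDist}, the probability of obtaining the Bell outcome $(\vec{a}_P,\vec{b}_P)$ that labels the candidate Pauli $P = P^{\vec{a}_P,\vec{b}_P}$ is
\[
p \;=\; \bigl|\langle \Phi^{\vec{a}_P,\vec{b}_P} | (I \otimes U) | \Phi^{\vec{0},\vec{0}}\rangle\bigr|^2 \;=\; \tfrac{1}{d^2}\bigl|\Tr[P^\dagger U]\bigr|^2.
\]
By the alternate expression for the distance in eq.~(\ref{eqn:DUU}), this equals $1 - D(U,P)^2 \geq 1 - \epsilon^2$. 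Since $\epsilon < \eps{1} = 1/\sqrt{2}$, this gives $p > 1/2$, so no other outcome can also exceed $1/2$: the correct label is the strict single-shot majority.

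Next I would run the circuit $M$ independent times and output the $2n$-bit string appearing most often. The algorithm fails only when the empirical count of the true outcome does not exceed $M/2$. By Hoeffding's inequality applied to the Bernoulli indicator of the correct outcome, the failure probability is at most $\exp\!\bigl(-2M(p - 1/2)^2\bigr)$. Using the factorization
\[
p - \tfrac{1}{2} \;\geq\; \tfrac{1}{2} - \epsilon^2 \;=\; (\eps{1} - \epsilon)(\eps{1} + \epsilon) \;\geq\; \eps{1}(\eps{1} - \epsilon),
\]
one obtains $(p-1/2)^2 \geq \eps{1}^2(\eps{1}-\epsilon)^2 = \tfrac12 (\eps{1}-\epsilon)^2$, and the choice $M = \mathcal{O}\!\bigl(\log(1/\delta)/(\eps{1}-\epsilon)^2\bigr)$ drives the failure probability below $\delta$, matching the advertised query count (each trial uses a single query to $U$).

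The main obstacle is ensuring the single-shot bias above $1/2$ is genuinely $\Omega((\eps{1}-\epsilon)^2)$ rather than merely positive, as otherwise the Hoeffding exponent would not give the stated $(\eps{1} - \epsilon)^{-2}$ scaling. This is handled cleanly by the algebraic identity $\tfrac12 - \epsilon^2 = (\eps{1}-\epsilon)(\eps{1}+\epsilon)$ together with $\eps{1}+\epsilon \geq \eps{1}$. Uniqueness of the plurality vote is then automatic, since a single outcome already has probability strictly greater than $1/2$.
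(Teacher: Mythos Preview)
Your proposal is correct and follows essentially the same approach as the paper: query $U$ on half of a maximally entangled state, Bell-measure, observe that the correct outcome has probability $\ge 1-\epsilon^2 > 1/2$, and amplify by majority vote using Hoeffding's inequality. Your treatment is in fact slightly more careful than the paper's, since you write the Hoeffding exponent with the correct square $(p-1/2)^2$ and make explicit the factorization $\tfrac12 - \epsilon^2 = (\eps{1}-\epsilon)(\eps{1}+\epsilon)$ that yields the stated $(\eps{1}-\epsilon)^{-2}$ scaling.
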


\begin{proof}
Let $P \in \Pn$ be closest Pauli element to $U$ such that $D(U, P) \leq \epsilon < \eps{1}$.
By Lemma \ref{lemma:LowUniqueness}, the closest Pauli $P$ is unique.
Similar to the circuit used in Lemma \ref{lemma:DistinguishP}, we query $U$ on one half of a maximally entangled state and then perform a Bell basis measurement.
Because $U$ is in general not a Pauli operator, the outcome is not deterministic.
The probability of measuring outcome $\K = \kl$ is
\begin{align*}
P(\K=\kl) &= |\langle \Phi^{\kl}| I\otimes U \mes{2^n}|^2\\
&= |\langle \Phi_{2^n}|(I\otimes P^{\kl}) (I\otimes U) \mes{2^n}|^2\\
&= \frac{1}{d^2}\left|\Tr[P^{\kl} U]\right|^2.
\end{align*}

Let $P = \Pab$ for $\vec{a},\vec{b}\in\binset{n}$.
The probability that the measurement correctly identifies $P$ is
\begin{align}
P(\K=\vec{a},\vec{b}) 
&= \frac{1}{d^2}\left|\Tr[UP^{\vec{a},\vec{b}} ]\right|^2
\geq 1 - \epsilon^2 \,
\end{align}
where the last inequality follows from eq.~(\ref{eqn:DUU}).
For $N$ runs of the circuit with outcomes $\K_1, \ldots, \K_N$, the algorithm succeeds if the majority of these measurements equal $\vec{a},\vec{b}$.
Treating each measurement as a Bernoulli random variable with probability of success $p \geq 1 - \epsilon^2$, by Hoeffding's inequality \cite{hoeffding1994probability} the probability of the algorithm failing is
\[
P(\text{fail}) \leq 2 e^{-2N(p-1/2)^2} = \delta.
\]
Solving for $N$, we find that $\mathcal{O}\left({\frac{\log(1/\delta)}{(\eps{1}- \epsilon)^2}}\right)$
measurements ensure the algorithm succeeds with probability at least $1-\delta$.
Each measurement requires a single query of $U$, so that also provides the query complexity of the algorithm to identify $P$.
\end{proof}

To learn the closest Clifford to an unknown unitary, we apply a similar majority vote scheme to a Twin-$U$ circuit to learn how that Clifford transforms each basis Pauli operator.
For this to work, it is necessary that the unitary $UPU^T$ is sufficiently close to the Pauli $CPC^T$ generated by an actual Clifford.
The following Lemma is a version of Lemma 16 from ref.~\cite{low2009learning}, modified for this context:

\begin{lemma}
If $D(U, V) \leq \epsilon$, then for all $P \in \Pn$
\begin{equation}
    D(U P U^T,V P V^T ) \leq 2\epsilon.
\end{equation}
\label{lemma:UPU}
\end{lemma}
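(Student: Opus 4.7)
The plan is a straightforward triangle-inequality argument leveraging the unitary invariance of $D$ asserted earlier in the excerpt, together with one short calculation establishing transpose invariance. First I would split the target distance as
\begin{equation*}
D(UPU^T, VPV^T) \leq D(UPU^T, VPU^T) + D(VPU^T, VPV^T).
\end{equation*}

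Next I would apply unitary invariance to each summand separately. Since $PU^T$ is the product of two unitaries and hence unitary, right-invariance collapses the first term to $D(UPU^T, VPU^T) = D(U, V)$. Since $VP$ is unitary, left-invariance gives $D(VPU^T, VPV^T) = D(U^T, V^T)$.

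The only slightly nontrivial step is to show $D(U^T, V^T) = D(U, V)$, which I would verify directly from the closed-form expression (\ref{eqn:DUU}). Using $(V^T)^\dagger = V^\star$ and $(V^\dagger U)^T = U^T V^\star$, one has $|\Tr[U^T (V^T)^\dagger]|^2 = |\Tr[U^T V^\star]|^2 = |\Tr[(V^\dagger U)^T]|^2 = |\Tr[V^\dagger U]|^2 = |\Tr[U V^\dagger]|^2$, where the penultimate equality uses that the trace is transpose-invariant. Therefore $D(U^T, V^T) = D(U, V)$, and combining with the two previous bounds yields $D(UPU^T, VPV^T) \leq 2 D(U, V) \leq 2\epsilon$.

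I do not anticipate any serious obstacle: this is essentially Low's Lemma 16 with $U^\dagger, V^\dagger$ replaced by $U^T, V^T$, and the only additional ingredient is the elementary transpose-invariance of $D$. Notably the argument never uses that $P$ is a Pauli; it goes through verbatim for any unitary $P$, so the hypothesis $P \in \Pn$ is stronger than needed for this lemma in isolation but matches the use case in the subsequent majority-vote algorithm.
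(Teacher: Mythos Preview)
Your proof is correct and follows essentially the same route as the paper: triangle inequality plus unitary invariance of $D$, together with a transpose/conjugate symmetry of $D$ established from the trace formula. The only cosmetic difference is that the paper first rewrites $UPU^T = W(VPV^T)W^T$ with $W = UV^\dagger$ and then splits through the intermediate point $VPV^T$ (using $D(I,W^\star)=D(I,W)$), whereas you split directly through $VPU^T$ (using $D(U^T,V^T)=D(U,V)$); the two computations are equivalent.
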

\begin{proof}
Let $P = \Pab$, and define unitaries $U_{ab} = U\Pab U^T$ and $V_{ab} = V\Pab V^T$, satisfying
\[
U_{ab}  = W V_{ab}W^T,
\]
where $W = UV^\dagger$.
Then, using unitary invariance
\begin{align*}
D(U_{ab}, V_{ab}) 
&= D(WV_{ab}W^T, V_{ab}) = D(WV_{ab}, V_{ab}W^*)\\
&\leq D(WV_{ab}, V_{ab}) + D(V_{ab}, V_{ab}W^*)\\
&\leq D(W, I) + D(I,W^*) = 2D(I,W)\\
&= 2D(U,V) \leq 2\epsilon.
\end{align*}

\end{proof}

With this result, we can efficiently learn the closest Clifford $C$ to an unknown unitary $U$ if $U$ is sufficiently close to $C$:

\begin{theorem}
Given oracle access to an unknown unitary $U \in U(2^n)$, suppose $C \in \Cn$ satisfies $D(U, C) \leq \epsilon < \eps{2}$.
With success probability $1- \delta$, we can learn $C$ up to global phase using $\mathcal{O}\left(\smash{\frac{n\log(n/\delta)}{(\eps{2} - \epsilon)^2}}\right)$ queries to $U$.
\end{theorem}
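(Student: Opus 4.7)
The plan is to mimic the two-stage structure of the Clifford-learning algorithm, but replace each deterministic circuit execution by a majority vote over many Twin-$U$ repetitions, since the Twin-$U$ outputs are no longer deterministic when $U \notin \Cn$. Uniqueness of the target $C$ is handed to us by Lemma \ref{lemma:LowUniqueness} at level $k=2$, since $\epsilon < \eps{2}$ is exactly its hypothesis, so the learning problem is well posed.

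For Stage 1, I would execute Twin-$U$ on the same $2n+1$ inputs $\vec{J}_0 = \vec{0}$ and $\vec{J}_i = \vec{e}_i$ used in the noiseless algorithm. On input $\J = (\vec{i}^T, \vec{j}^T)^T$, Twin-$C$ would deterministically output some $\vec{K}^\star = ({\vec{k}^\star}^T, {\vec{\ell}^\star}^T)^T$, and the associated Pauli $P^{\vec{k}^\star, \vec{\ell}^\star}$ coincides (up to phase) with $C P^{\ij} C^T$. Using Lemma \ref{lemma:CircuitOutputDist} together with the identity $\tfrac{1}{d^2}|\Tr[AB^\dagger]|^2 = 1 - D(A,B)^2$ extracted from eq.~(\ref{eqn:DUU}), the Twin-$U$ probability of producing $\vec{K}^\star$ is $1 - D(U P^{\ij} U^T, P^{\vec{k}^\star, \vec{\ell}^\star})^2$. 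Lemma \ref{lemma:UPU} and the triangle inequality then give $D(U P^{\ij} U^T, P^{\vec{k}^\star, \vec{\ell}^\star}) \leq 2\epsilon$, so the correct outcome is sampled with probability at least $1 - 4\epsilon^2$. The hypothesis $\epsilon < \eps{2}$ is precisely what ensures $2\epsilon < \eps{1}$, i.e., that this bias exceeds $1/2$. A Hoeffding bound mirroring the one in Lemma \ref{lemma:LearnCloseP} then makes each majority vote succeed with probability at least $1-\delta'$ using $\mathcal{O}(\log(1/\delta')/(\eps{2} - \epsilon)^2)$ runs, after noting the algebraic identity $\eps{1} - 2\epsilon = 2(\eps{2} - \epsilon)$.

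Stage 2 is similar but easier. Because $D$ is unitarily invariant and $\Ctil^\dagger C = \Phf$, we have $D(\Ctil^\dagger U, \Phf) = D(U,C) \leq \epsilon < \eps{1}$, so Lemma \ref{lemma:LearnCloseP} applies verbatim to the unitary $\Ctil^\dagger U$ with the larger gap $\eps{1} - \epsilon$, producing a sub-leading query cost. Taking $\delta' = \delta/(2n+2)$ and union bounding over the $2n+1$ Stage-1 votes plus the single Stage-2 vote yields the global success probability $1-\delta$. Since each Twin-$U$ run consumes two queries of $U$, Stage 1 dominates and gives the advertised $\mathcal{O}(n \log(n/\delta)/(\eps{2} - \epsilon)^2)$. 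I expect the only non-mechanical step to be recognizing the chain $2\epsilon < \eps{1} \iff \epsilon < \eps{2}$: the doubling penalty incurred by Lemma \ref{lemma:UPU} is exactly what promotes the Pauli-level threshold $\eps{1}$ of Lemma \ref{lemma:LearnCloseP} into the Clifford-level threshold $\eps{2}$ appearing in the theorem statement.
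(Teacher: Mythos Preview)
Your proposal is correct and follows essentially the same two-stage majority-vote argument as the paper, including the same use of Lemma~\ref{lemma:CircuitOutputDist}, Lemma~\ref{lemma:UPU}, Lemma~\ref{lemma:LearnCloseP}, and the union bound with $\delta' = \delta/(2n+2)$. Your explicit observation that $\eps{1} - 2\epsilon = 2(\eps{2} - \epsilon)$ is a nice clarification the paper leaves implicit; note only that the triangle inequality is not actually needed in Stage~1, since $P^{\vec{k}^\star,\vec{\ell}^\star}$ \emph{equals} $C P^{\ij} C^T$ up to phase and Lemma~\ref{lemma:UPU} applies directly.
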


\begin{proof}
Let $C \in \Cn$ be a Clifford that satisfies $D(U, C) \leq \epsilon < \eps{2}$.
By Lemma \ref{lemma:LowUniqueness}, this Clifford is unique up to global phase.
Recall that eq.~(\ref{eqn:CliffordBasis}) characterizes $C$ on a basis of Pauli operators, and that $C$ can be decomposed as in eq.~(\ref{eqn:DecomposeC}).
%The algorithm consists of two stages: first we learn the symplectic representation $S$ associated with $\Ctil$, and second we learn the Pauli $\Phf$.

First stage: To learn the symplectic representation $S$ associated with $\Ctil$, we execute a Twin-$U$ circuit with inputs $\J_0, \J_1,\ldots,\J_{2n}$, where $\J_0 = \vec{0}$ and $\J_i = \vec{e}_i$ for all $i=1,\ldots,2n$.
With $U$ no longer Clifford, the output is no longer guaranteed to be $S\J + \vec{F}_0$.
Nonetheless, we can show that the circuit will output $S\J + \vec{F}_0$ with probability greater than $1/2$ as long as $\epsilon < \eps{2}$.
Because the circuit succeeds most of the time, we can take the majority vote of multiple runs.

If $\K$ is the random outcome of the Twin-$U$ circuit, we calculate using Lemma \ref{lemma:CircuitOutputDist}
\begin{align*}
P(\K = S\J +\vec{F}_0 | \J = \ij) 
&= \frac{1}{d^2}\left|\Tr[U P^{\ij}U^T P^{S\J +\vec{F}_0}]\right|^2\\
&= 1- D(U P^{\ij}U^T,  P^{S\J +\vec{F}_0})^2\\
&= 1- D(U P^{\ij}U^T,  C P^{\ij}C^T)^2.
\end{align*}
Now apply Lemma \ref{lemma:UPU} to obtain the lower bound
\begin{equation}
P(\K = S\J +\vec{F}_0 | \J = \ij) 
\geq 1- 4\epsilon^2 > \frac{1}{2}.
\end{equation}
For a fixed input $\J$, this is fundamentally the same learning task solved in Lemma \ref{lemma:LearnCloseP} where we were learning the closest Pauli $P$ to an unknown unitary $U$ but now with $D(U,P) \leq 2\epsilon < \frac{1}{\sqrt2}$. 
Thus with success probability $1-\Delta$, we can learn the bit string $\vec{Q}_i \equiv S\J_i +\vec{F}_0$ that the circuit would output by querying $C$ instead of $U$ using $\mathcal{O} ( \log(1/\Delta)/ ( \eps{2} - \epsilon)^2 )$ queries to $U$.
Cycling through all inputs $\J_0,\ldots,\J_{2n}$, we learn $S$, completing the first stage of the algorithm.

Second stage: To learn $\Phf$ associated with $C$, we query $U$ and apply a compiled circuit for $\Ctil^\dagger$.
Noting from unitary invariance that
\begin{align*}
D(\Ctil^\dagger U, \Phf) &= D(\Phf C^\dagger U, \Phf) \\
&=D(C^\dagger U, I) = D(U,C) \leq \epsilon,
\end{align*}
we see that $\Ctil^\dagger U$ is close enough to $\Phf$ to apply Lemma \ref{lemma:LearnCloseP} and learn $\Phf$ with $\mathcal{O}(\log(1/\Delta)/(\eps{2} - \epsilon)^2)$ queries to $U$.

The algorithm consists of $2n + 2$ learning tasks, all of which must succeed for the algorithm overall to succeed.
Overall success probability of at least $1-\delta$ can be ensured if the failure probability $\Delta$ of any one learning task is $\Delta = \frac{\delta}{2n+2}$.
We conclude that $2n+2$ learning tasks each of which has $\mathcal{O}(\log(n/\delta)/(\eps{2} - \epsilon)^2)$ query complexity gives the algorithm a total query complexity of $\mathcal{O}\left(\smash{n\frac{\log(n/\delta)}{(\eps{2} - \epsilon)^2}}\right)$. 

\end{proof}

\section{Conclusion}

We have shown that asymptotically optimal process tomography for an unknown Clifford unitary can be accomplished without access to $C^\dagger$. Due to the deterministic nature of the algorithm,  with minimal modification the approach also permits learning the closest Clifford to an unknown unitary $U$ by using majority vote to learn how each basis Pauli is transformed.

An open question that remains is whether the algorithm can be generalized to identify unknown unitaries from higher levels in the Gottesman-Chuang hierarchy, rather than just Clifford unitaries. Our approach relied on the fact that complex conjugation of a Clifford has the same effect as multiplying it by a Pauli operator. Developing an efficient algorithm for learning unitaries above the second level of the Gottesman-Chuang hierarchy will likely require generalizing this relationship to higher level unitaries. 

\section{Acknowledgments}

This work was partially supported by US DOE grant DE-SC0020360.
We thank Erich Mueller and Mark Wilde for helpful discussions.

\bibliography{bibliography}

\end{document}